\newcommand{\nnz}{\text{nnz}}
\newcommand{\NaiveAlg}{Naive}
\newcommand{\ParNMF}{HPC-NMF}
\begin{document}

\newcommand{\lt}{\left}
\newcommand{\rt}{\right}

\setlength{\pdfpageheight}{\paperheight}
\setlength{\pdfpagewidth}{\paperwidth}

\titlebanner{}\preprintfooter{}
\title{A High-Performance Parallel Algorithm for Nonnegative Matrix Factorization}

\authorinfo{Ramakrishnan Kannan}{Georgia Tech}{rkannan@gatech.edu}
\authorinfo{Grey Ballard}{Sandia National Laboratories}{gmballa@sandia.gov}
\authorinfo{Haesun Park}{Georgia Tech}{hpark@cc.gatech.edu}

\maketitle

\begin{abstract}
Non-negative matrix factorization (NMF) is the problem of determining two non-negative low rank factors $\WW$ and $\HH$, for the given input matrix $\AA$, such that $\AA \approx \WW \HH$.  NMF is a useful tool for many applications in different domains such as topic modeling in text mining, background separation in video analysis, and community detection in social networks.  Despite its popularity in the data mining community, there is a lack of efficient parallel software to solve the problem for big datasets.  Existing distributed-memory algorithms are limited in terms of performance and applicability, as they are implemented using Hadoop and are designed only for sparse matrices.  

We propose a distributed-memory parallel algorithm that computes the factorization by iteratively solving alternating non-negative least squares (NLS) subproblems for $\WW$ and $\HH$.  To our knowledge, our algorithm is the first high-performance parallel algorithm for NMF.  It maintains the data and factor matrices in memory (distributed across processors), uses MPI for interprocessor communication, and, in the dense case, provably minimizes communication costs (under mild assumptions).  As opposed to previous implementations, our algorithm is also flexible: (1) it performs well for dense and sparse matrices, and (2) it allows the user to choose from among multiple algorithms for solving local NLS subproblems within the alternating iterations. We demonstrate the scalability of our algorithm and compare it with baseline implementations, showing significant performance improvements.
\end{abstract} \section{Introduction}

Non-negative Matrix Factorization (NMF) is the problem of finding two low rank factors $\WW\in \Rnplus{m\times k}$ and $\HH\in \Rnplus{k \times n}$ for a given input matrix  $\AA\in \Rn{m\times n}$, such that $\AA \approx \WW \HH$, where $k \ll min(m,n)$.
Formally, NMF problem \cite{seung2001algorithms} can be defined as \SplitN{\label{eqn:original NMF}}{
\min_{\WW \geq 0,\HH \geq 0} & \quad f(\WW,\HH) \equiv \|\AA-\WW\HH\|_F^2.\\
}

NMF is widely used in data mining and machine learning as a dimension reduction and factor analysis method. 
It is a natural fit for many real world problems as the non-negativity is inherent in many representations of real-world data and
the resulting low rank factors are expected to have natural interpretation. The applications of NMF range from text mining \cite{pauca2004text},  computer vision \cite{hoyer2004non}, bioinformatics \cite{kim2007sparse}, to blind source separation  \cite{cichocki2009nonnegative}, unsupervised clustering \cite{kuang2012symmetric}  and many other areas.
For typical problems today, $m$ and $n$ can be on the order of thousands to millions or more, and $k$ is typically less than 100.

There is a vast literature
on algorithms for NMF and their convergence properties \cite{kim2013nonnegative}.   
The commonly adopted NMF algorithms are -- (i) Multiplicative Update (MU) \cite{seung2001algorithms} (ii) Hierarchical Alternative Least Squares (HALS) \cite{cichocki2009nonnegative} (iii) Block Principal Pivoting (BPP) \cite{kim2011fast}, and (iv) Stochastic Gradient Descent (SGD) Updates \cite{gemulla2011large}. 
As described in Equation \ref{eqn:two block}, most of the algorithms in NMF literature are based on Alternating Non-negative Least Squares (ANLS) scheme that iteratively optimizes each of the low rank factors $\WW$ and $\HH$ while keeping the other fixed. 
It is important to note that in such iterative alternating minimization techniques, each subproblem is a constrained convex optimization problem. 
Each of these subproblems is then solved using standard optimization techniques such as projected gradient, interior point, etc., and a detailed survey for solving this constrained convex optimization problem can be found in \cite{xiong2013survey,kim2013nonnegative}. 
In this paper, for solving the subproblems, our implementation uses a fast active-set based method called Block Principal Pivoting (BPP) \cite{kim2011fast}, but the parallel algorithm proposed in this paper can be easily extended for other algorithms such as MU and HALS.

Recently with the advent of large scale internet data and interest in Big Data, researchers have started studying scalability of many foundational machine learning algorithms. 
To illustrate the dimension of matrices commonly used in the machine learning community, we present a few examples. 
Now-a-days the adjacency matrix of a billion-node social network is common. 
In the matrix representation of a video data, every frame contains three matrices for each RGB color, which is reshaped into a column.  
Thus in the case of a 4K video, every frame will take approximately 27 million rows (4096 row pixels x 2196 column pixels x 3 colors). 
Similarly, the popular representation of documents in text mining is a bag-of-words matrix, where the rows are the dictionary and the columns are the documents (e.g., webpages). 
Each entry $A_{ij}$ in the bag-of-words matrix is generally the frequency count of the word $i$  in the document $j$. 
Typically with the explosion of the new terms in social media, the number of words spans to millions. 

To handle such high dimensional matrices, it is important to study low rank approximation methods in a data-distributed environment.
For example, in many large scale scenarios, data samples are collected and stored over many general purpose computers, as the set of samples is too large to store on a single machine. 
In this case, the computation must also be distributed across processors.
Local computation is preferred as local access of data is much faster than remote access due to the costs of interprocessor communication. 
However, for low rank approximation algorithms that depend on global information (like MU, HALS, and BPP), some communication is necessary.

The simplest way to organize these distributed computations on large datasets is through a MapReduce framework like Hadoop, but this simplicity comes at the expense of performance.
In particular, most MapReduce frameworks require data to be read from and written to disk at every iteration, and they involve communication-intensive global, input-data shuffles across machines.

In this work, we present a much more efficient algorithm and implementation using tools from the field of High-Performance Computing (HPC).
We maintain data in memory (distributed across processors), take advantage of optimized libraries for local computational routines, and use the Message Passing Interface (MPI) standard to organize interprocessor communication.
The current trend for high-performance computers is that available parallelism (and therefore aggregate computational rate) is increasing much more quickly than improvements in network bandwidth and latency.
This trend implies that the relative cost of communication (compared to computation) is increasing.

To address this challenge, we analyze algorithms in terms of both their computation and communication costs.
In particular, our proposed algorithm ensures that after the input data is initially read into memory, it is \emph{never} communicated; we communicate only the factor matrices and other smaller temporary matrices.
Furthermore, we prove that in the case of dense input and under the assumption that $k\leq \sqrt{mn/p}$, our proposed algorithm \emph{minimizes} bandwidth cost (the amount of data communicated between processors) to within a constant factor of the lower bound.
We also reduce latency costs (the number of times processors communicate with each other) by utilizing MPI collective communication operations, along with temporary local memory space, performing $O(\log p)$ messages per iteration, the minimum achievable for aggregating global data. 

Fairbanks et al. \cite{Fairbanks2015} discuss a parallel NMF algorithm designed for multicore machines.  
To demonstrate the importance of minimizing communication, we consider this approach to parallelizing an alternating NMF algorithm in distributed memory.
While this naive algorithm exploits the natural parallelism available within the alternating iterations (the fact that rows of $\WW$ and columns of $\HH$ can be computed independently), it performs more communication than necessary to set up the independent problems.
We compare the performance of this algorithm with our proposed approach to demonstrate the importance of designing algorithms to minimize communication; that is, simply parallelizing the computation is not sufficient for satisfactory performance and parallel scalability.

The main contribution of this work is a new, high-performance parallel algorithm for non-negative matrix factorization.
The algorithm is flexible, as it is designed for both sparse and dense input matrices and can leverage many different algorithms for solving local non-negative least squares problems.
The algorithm is also efficient, maintaining data in memory, using MPI collectives for interprocessor communication, and using efficient libraries for local computation.
Furthermore, we provide a theoretical communication cost analysis to show that our algorithm reduces communication relative to the naive approach, and in the case of dense input, that it provably minimizes communication.
We show with performance experiments that the algorithm outperforms the naive approach by significant factors, and that it scales well for up to 100s of processors on both synthetic and real-world data.

 \section{Preliminaries}

\subsection{Notation}
\label{sec:notations}

Table \ref{tab:notation} summarizes the notation we use throughout.
We use \emph{upper case} letters for matrices and \emph{lower case} letters for vectors.
For example, $\AA$ is a matrix and $\aa$ is a column vector and $\aa^T$ is a row vector. 
The subscripts are used for sub-blocks of matrices. 
We use $m$ and $n$ to denote the numbers of rows and columns of $\AA$, respectively, and we assume without loss of generality $m>n$ throughout.

\begin{table}\begin{center}
\begin{tabular}{|l|l|}
\hline
$\AA$ & Input Matrix \\
$\WW$ & Left Low Rank Factor \\
$\HH$ & Right Low Rank Factor \\
$m$ & Number of rows of input matrix \\
$n$ & Number of columns of input matrix \\
$k$ & Low Rank \\
$\M{M}_i$ & $i$th row block of matrix $\M{M}$ \\
$\M{M}^i$ & $i$th column block of matrix $\M{M}$  \\
$\M{M}_{ij}$ & $(i,j)$th subblock of $\M{M}$ \\
$p$ & Number of parallel processes \\
$p_r$ & Number of rows in processor grid \\
$p_c$ & Number of columns in processor grid \\
\hline
\end{tabular}
\end{center}
\caption{Notation}
\label{tab:notation}
\end{table}
\subsection{Communication model}
\label{sec:comm-model}

To analyze our algorithms, we use the $\alpha$-$\beta$-$\gamma$ model of distributed-memory parallel computation.
In this model, interprocessor communication occurs in the form of messages sent between two processors across a bidirectional link (we assume a fully connected network).
We model the cost of a message of size $n$ words as $\alpha+n\beta$, where $\alpha$ is the per-message latency cost and $\beta$ is the per-word bandwidth cost.
Each processor can compute floating point operations (flops) on data that resides in its local memory; $\gamma$ is the per-flop computation cost.
With this communication model, we can predict the performance of an algorithm in terms of the number of flops it performs as well as the number of words and messages it communicates.
For simplicity, we will ignore the possibilities of overlapping computation with communication in our analysis.
For more details on the $\alpha$-$\beta$-$\gamma$ model, see \cite{TRG05,CH+07}.

\subsection{MPI collectives}
\label{sec:collectives}

Point-to-point messages can be organized into collective communication operations that involve more than two processors.
MPI provides an interface to the most commonly used collectives like broadcast, reduce, and gather, as the algorithms for these collectives can be optimized for particular network topologies and processor characteristics.
The algorithms we consider use the all-gather, reduce-scatter, and all-reduce collectives, so we review them here, along with their costs.
Our analysis assumes optimal collective algorithms are used (see \cite{TRG05,CH+07}), though our implementation relies on the underlying MPI implementation.

At the start of an all-gather collective, each of $p$ processors owns data of size $n/p$. 
After the all-gather, each processor owns a copy of the entire data of size $n$. 
The cost of an all-gather is $\alpha\cdot \log p + \beta \cdot \frac{p-1}{p}n$.
At the start of a reduce-scatter collective, each processor owns data of size $n$.
After the reduce-scatter, each processor owns a subset of the sum over all data, which is of size $n/p$.
(Note that the reduction can be computed with other associative operators besides addition.)
The cost of an reduce-scatter is $\alpha\cdot \log p + (\beta+\gamma) \cdot \frac{p-1}{p}n$.
At the start of an all-reduce collective, each processor owns data of size $n$.
After the all-reduce, each processor owns a copy of the sum over all data, which is also of size $n$.
The cost of an all-reduce is $2\alpha\cdot \log p + (2\beta+\gamma) \cdot \frac{p-1}{p}n$.
Note that the costs of each of the collectives are zero when $p=1$.

 \section{Related work}\label{sec:related}

In the data mining and machine learning literature there is an overlap between low rank approximations and matrix factorizations due to the nature of applications. 
Despite its name, Non-negative ``Matrix Factorization'' is really a low rank approximation.  
In this section, we discuss the various distributed NMF algorithms. 

The recent distributed NMF algorithms in the literature are \cite{liao2014cloudnmf},
\cite{liu2010distributed}, \cite{gemulla2011large}, \cite{Yin2014} and \cite{Faloutsos2014}. 
All of these works propose distributed NMF algorithms implemented using Hadoop. 
Liu, Yang,  Fan, He and Wang \cite{liu2010distributed} propose running Multiplicative Update (MU) for KL divergence, squared loss, and ``exponential'' loss functions. 
Matrix multiplication, element-wise multiplication, and element-wise division are the building blocks of the MU algorithm. 
The authors discuss performing these matrix operations effectively in Hadoop for sparse matrices. 
In similar directions, Liao, Zhang, Guan and Zhou \cite{liao2014cloudnmf} implement an open source Hadoop based MU algorithm and study its scalability on large-scale biological datasets. 
Also, Yin, Ghao and Zhang \cite{Yin2014} present a scalable NMF that can perform frequent updates, which aim to use the most recently updated data. 
Gemmula, Nijkamp, Erik, Haas and Sismanis \cite{gemulla2011large} propose a \emph{Generic algorithm} that works on different loss functions, often involving the distributed computation of the gradient. 
According to the authors, this formulation can also be extended to handle non-negative constraints. 
Similarly Faloutsos, Beutel, Xing and Papalexakis, \cite{Faloutsos2014} propose a distributed, scalable method for decomposing matrices, tensors, and coupled data sets through stochastic gradient descent on a variety of objective functions. 
The authors also provide an implementation that can enforce non-negative constraints on the factor matrices.

 \section{Foundations}

In this section, we will briefly introduce the Alternating Non-negative Least Squares (ANLS) framework, multiple methods for solving non-negative least squares problems (NLS) including Block Principal Pivoting (BPP), and a straightforward approach to parallelization of the framework. 

\subsection{Alternating Non-negative Least Squares}

According to the ANLS framework, we first partition the variables of the NMF problem into two blocks  $\WW$ and $\HH$.
Then we solve the following equations iteratively until a stopping criteria is satisfied. 
\SplitN{\label{eqn:two block}} {
\WW &\leftarrow \Argmin{\tilde \WW\geq 0}\NormBr{\AA-\tilde\WW\HH}_F^2,\\
\HH &\leftarrow  \Argmin{\tilde\HH\geq 0}\NormBr{\AA-\WW\tilde\HH}_F^2.
}

The optimizations sub-problem for $\WW$ and $\HH$ are NLS problems which can be solved by a number of methods from generic constrained convex optimization to active-set methods. 
Typical approaches use form the normal equations of the least squares problem (and then solve them enforcing the non-negativity constraint), which involves matrix multiplications of the factor matrices with the data matrix.
Algorithm \ref{alg:anlsnmf} shows this generic approach.

\begin{algorithm}
\caption{$[\WW,\HH] = \text{ANLS}(A,k)$}
\label{alg:anlsnmf}
\begin{algorithmic}[1]
\Require $\AA$ is an $m\times n$ matrix, $k$ is rank of approximation
\State Initialize $\HH$ with a non-negative matrix in $\Rn{n\times k}_+$.
\While{not converged}
  \State Update $\WW$ using $\HH \HH^T$ and $\AA \HH^T$
  \label{line:anls:W}
  \State Update $\HH$ using $\WW^T\WW$ and $\WW^T \AA$
  \label{line:anls:H}
\EndWhile
\end{algorithmic}
\end{algorithm}

The specifics of lines \ref{line:anls:W} and  \ref{line:anls:H} depend on the NLS method used.
For example, the update equations for MU \cite{seung2001algorithms} are

\SplitN{\label{eqn:muupdate}} {
w_{ij} &\leftarrow w_{ij} \frac{(\AA \HH^T)_{ij}}{(\WW \HH \HH^T)_{ij}}\\
h_{ij} &\leftarrow  h_{ij} \frac{(\WW^T \AA)_{ij}}{(\WW^T \WW \HH)_{ij}}.
} 
Note that after computing $\HH \HH^T$ and $\AA \HH^T$, the cost of updating $\WW$ is dominated by computing $\WW\HH\HH^T$, which is $2mk^2$ flops.
Given $\AA\HH^T$ and $\WW\HH\HH^T$, each entry of $\WW$ can be updated independently and cheaply.
Likewise, the extra computation cost of updating $\HH$ is $2nk^2$ flops.

HALS updates $\WW$ and $\HH$ by applying block coordinate descent on the columns of $\WW$ and rows of $\HH$ \cite{cichocki2009nonnegative}.
The update rules are
\SplitN{\label{eqn:halsupdate}} {
\ww^i &\leftarrow \frac{\lt[ (\AA\HH^T)^i - \sum_{\substack{l=1 \\ l \neq i}}^k (\HH\HH^T)_{li} \ww^l \rt]_{+}}{(\HH\HH^T)_{ii}} \\
\hh_i &\leftarrow \frac{\lt[ (\WW^T\AA)_i - \sum_{\substack{l=1 \\ l \neq i}}^k (\WW^T\WW)_{li} \hh_l \rt]_{+}}{(\WW^T\WW)_{ii}},
} 
where $\ww^i$ is the $i$th column of $\WW$ and $\hh_i$ is the $i$ row of $\HH$.
Note that the columns of $\WW$ and rows of $\HH$ are updated in order, so that the most up-to-date values are used in the update.
The extra computation is again $2(m+n)k^2$ flops for updating both $\WW$ and $\HH$. 

\subsection{Block Principal Pivoting}

In this paper, we focus on and use the block principal pivoting \cite{kim2011fast} method to solve the non-negative least squares problem, as it has the fastest convergence rate (in terms of number of iterations). 
We note that any of the NLS methods can be used within our parallel frameworks (Algorithms \ref{alg:naive} and \ref{alg:2D}).  

BPP is the state-of-the-art method for solving the NLS subproblems in Eq. \eqref{eqn:two block}.
The main sub-routine of BPP is the single right-hand side NLS problem
\SplitN{\label{eqn:single NLS}}{
\min_{\xx\geq 0} \|\CC\xx-\bb\|_2^2.
}

The Karush-Kuhn-Tucker (KKT) optimality condition for  Eq.~\eqref{eqn:single NLS} is as follows
\SplitS{\label{eqn:KKT}}{
\yy &= \CC^T \CC \xx - \CC^T \bb\\
\yy &\geq 0\\
\xx &\geq 0\\
\xx^T \yy & = 0.
}
The KKT condition \eqref{eqn:KKT} states that at optimality, the support sets (i.e., the non-zero elements)
of $\xx$ and $\yy$ are complementary to each other. Therefore, Eq.~\eqref{eqn:KKT} is an instance of
the \emph{Linear Complementarity Problem} (LCP) which arises frequently in quadratic programming.
When $k\ll\min(m,n)$, active set and active-set like methods are very suitable because most
computations involve matrices of sizes $m\times k, n\times k$, and $k\times k$ which are
small and easy to handle.

If we knew which indices correspond to nonzero values in the optimal solution, then computing it is an unconstrained least squares problem on these indices.
In the optimal solution, call the set of indices $i$ such that $x_i=0$ the active set, and let the remaining indices be the passive set. The BPP algorithm works to find this active set and passive set. Since the above problem is convex, the correct partition of the optimal solution will satisfy the KKT condition (Eq.~\eqref{eqn:KKT}).  The BPP algorithm greedily swaps indices between the active and passive set until finding a partition that satisfies the KKT condition. In the partition of the optimal solution, the values of the indices that belong to the active set will take zero. The values of the indices that belong to the passive set are determined by solving the least squares problem using normal equations restricted to the passive set. Kim, He and Park \cite{kim2011fast}, discuss the BPP algorithm in further detail. 
Because the algorithm is iterative, we define $C_\text{BPP}(k,c)$ as the cost of solving $c$ instances of Eq.~\eqref{eqn:single NLS} using BPP, given the $k\times k$ matrix $\CC^T\CC$ and $c$ columns of the form $\bb$. 

\subsection{Naive Parallel NMF Algorithm} 
\label{sec:naive}

In this section we present a naive parallelization of any NMF algorithm \cite{Fairbanks2015} that follows the ANLS framework (Algorithm \ref{alg:anlsnmf}).
Each NLS problem with multiple right-hand sides can be parallelized on the observation that the problems for multiple right-hand sides are independent from each other. 
That is, we can solve several instances of Eq.~\eqref{eqn:single NLS} independently for different $\bb$ where $\CC$ is fixed, which implies that we can optimize row blocks of $\WW$ and column blocks of $\HH$ in parallel. 

\begin{algorithm}[!t]
\caption{$[\WW,\HH] = \text{Naive-Parallel-NMF}(\AA,k)$}
\label{alg:naive}
\begin{algorithmic}[1]
\Require $\AA$ is an $m\times n$ matrix distributed both row-wise and column-wise across $p$ processors, $k$ is rank of approximation
\Require Local matrices: $\AA_{i}$ is $m/p\times n$, $\AA^{i}$ is $m\times n/p$, $\WW_i$ is $m/p\times k$, $\HH^i$ is $k\times n/p$
\State $p_i$ initializes $\HH^i$
\While{not converged}
	\Statex \textbf{/* Compute $\WW$ given $\HH$ */} 
	\State collect $\HH$ on each processor using all-gather
		\label{line:naive:allgatherH}
	\State $p_i$ computes $\WW_i \leftarrow \Argmin{\tilde \WW\geq 0} \|A_i - \tilde \WW \HH\|$
		\label{line:naive:computeW}
	\Statex \textbf{/* Compute $\HH$ given $\WW$ */} 
	\State collect $\WW$ on each processor using all-gather
		\label{line:naive:allgatherW}
	\State $p_i$ computes $\HH^i \leftarrow \Argmin{\tilde \HH\geq 0} \|A^i - \WW \tilde \HH\|$
		\label{line:naive:computeH}
\EndWhile
\Ensure $\displaystyle \WW, \HH \approx \Argmin{\M{\tilde W} \geq 0, \M{\tilde H} \geq 0} \|\AA- \M{\tilde W} \M{\tilde H}\|$
\Ensure $\WW$ is an $m\times k$ matrix distributed row-wise across processors, $\HH$ is a $k\times n$ matrix distributed column-wise across processors
\end{algorithmic}
\end{algorithm}

Algorithm \ref{alg:naive} presents a straightforward approach to setting up the independent subproblems.
Let us divide $\WW$ into row blocks $\WW_1, \ldots, \WW_p$ and $\HH$ into column blocks $\HH^1, \ldots, \HH^p$. 
We then double-partition the data matrix $\AA$ accordingly into row blocks $\AA_{1}, \ldots, \AA_p$ and column blocks $\AA^1, \ldots, \AA^p$ so that processor $i$ owns both $\AA_i$ and $\AA^i$ (see Figure \ref{fig:naive}).
With these partitions of the data and the variables, one can implement any ANLS algorithm in parallel, with only one communication step for each solve.

The computation cost of Algorithm \ref{alg:naive} depends on the local NLS algorithm.
For comparison with our proposed algorithm, we assume each processor uses BPP to solve the local NLS problems.
Thus, the computation at line \ref{line:naive:computeW} consists of computing $\AA^i\HH^T$, $\HH\HH^T$, and solving NLS given the normal equations formulation of rank $k$ for $m/p$ columns.
Likewise, the computation at line \ref{line:naive:computeH} consists of computing $\WW^T\AA$, $\WW^T\WW$, and solving NLS for $n/p$ columns.
In the dense case, this amounts to $4mnk/p+(m+n)k^2+C_\text{BPP}((m+n)/p,k)$ flops.
In the sparse case, processor $i$ performs $2(\nnz(\AA_i)+\nnz(\AA^i))k$ flops to compute $\AA^i\HH^T$ and $\WW^T\AA_i$ instead of $4mnk/p$.

The communication cost of the all-gathers at lines \ref{line:naive:allgatherH} and \ref{line:naive:allgatherW}, based on the expression given in Section \ref{sec:collectives} is $\alpha\cdot 2\log p + \beta\cdot (m+n)k$.
The local memory requirement includes storing each processor's part of matrices $\AA$, $\WW$, and $\HH$.
In the case of dense $\AA$, this is $2mn/p+(m+n)k/p$ words, as $\AA$ is stored twice; in the sparse case, processor $i$ requires $\nnz(\AA_i)+\nnz(\AA^i)$ words for the input matrix and $(m+n)k/p$ words for the output factor matrices.
Local memory is also required for storing temporary matrices $\WW$ and $\HH$ of size $(m+n)k$ words.

We summarize the algorithmic costs of Algorithm \ref{alg:naive} in Table \ref{tab:costs}.
This naive algorithm \cite{Fairbanks2015} has three main drawbacks: (1) it requires storing two copies of the data matrix (one in row distribution and one in column distribution) and both full factor matrices locally, (2) it does not parallelize the computation of $\HH\HH^T$ and $\WW^T\WW$ (each processor computes it redundantly), and (3) as we will see in Section \ref{sec:parNMF}, it communicates more data than necessary.

\begin{figure}[!t]
\centering
\begin{tikzpicture}

\draw[xscale=4/3,yscale=9,ultra thick, dashed] (0,0) grid (3,1);
\draw[xscale=4,yscale=3,ultra thick] (0,0) grid (1,3);
\draw[yscale=3,ultra thick] (-2,0) grid (-1,3);
\draw[xscale=4/3,ultra thick] (0,10) grid (3,11);

\node[draw=none] at (2,4.5) {\LARGE $\AA$};
\node[draw=none] at (4.5,7.5) {\Large $\AA_0$};
\node[draw=none] at (4.5,4.5) {\Large $\AA_1$};
\node[draw=none] at (4.5,1.5) {\Large $\AA_2$};
\node[draw=none] at (2/3,-0.5) {\Large $\AA^0$};
\node[draw=none] at (2,-0.5) {\Large $\AA^1$};
\node[draw=none] at (10/3,-0.5) {\Large $\AA^2$};
\node[draw=none] at (-1.5,-0.5) {\LARGE $\WW$};
\node[draw=none] at (-1.5,7.5) {\Large $\WW_0$};
\node[draw=none] at (-1.5,4.5) {\Large $\WW_1$};
\node[draw=none] at (-1.5,1.5) {\Large $\WW_2$};
\node[draw=none] at (-1,10.5) {\LARGE $\HH$};
\node[draw=none] at (2/3,10.5) {\Large $\HH^0$};
\node[draw=none] at (2,10.5) {\Large $\HH^1$};
\node[draw=none] at (10/3,10.5) {\Large $\HH^2$};

\node[draw=none] at (-0.25,10.5) {$k$};
\node[draw=none] at (-0.25,4.5) {$m$};
\node[draw=none] at (-0.25,6.5) {$\uparrow$};
\node[draw=none] at (-0.25,2.5) {$\downarrow$};
\node[draw=none] at (-0.75,7.5) {$\frac{m}{p}$};
\node[draw=none] at (-1.5,9.25) {$k$};
\node[draw=none] at (2,9.25) {$n$};
\node[draw=none] at (1,9.25) {$\leftarrow$};
\node[draw=none] at (3,9.25) {$\rightarrow$};
\node[draw=none] at (2/3,9.75) {$\frac{n}{p}$};

\end{tikzpicture} \caption{
Distribution of matrices for \NaiveAlg\ (Algorithm \ref{alg:naive}), for $p=3$.
Note that $\AA_i$ is $m/p \times n$, $\AA^i$ is $m\times n/p$, $\WW_i$ is $m/p_r \times k$, and $\HH^i$ is $k\times n/p$.
}
\label{fig:naive}
\end{figure}

 \section{High Performance Parallel NMF}
\label{sec:parNMF}

We present our proposed algorithm, \ParNMF, as Algorithm \ref{alg:2D}. 
The algorithm assumes a 2D distribution of the data matrix $\AA$ across a $p_r \times p_c$ grid of processors (with $p=p_rp_c$), as shown in Figure \ref{fig:2D-distribution}.
Algorithm \ref{alg:2D} performs an alternating method in parallel with a per-iteration bandwidth cost of $O\lt(\min\lt\{\sqrt{mnk^2/p},nk\rt\}\rt)$ words, latency cost of $O(\log p)$ messages, and load-balanced computation (up to the sparsity pattern of $\AA$ and convergence rates of local BPP computations).
To minimize the communication cost and local memory requirements, $p_r$ and $p_c$ are chosen so that $m/p_r\approx n/p_c\approx \sqrt{mn/p}$, in which case the bandwidth cost is $O\lt(\sqrt{mnk^2/p}\rt)$.

If the matrix is very tall and skinny, i.e., $m/p>n$, then we choose $p_r=p$ and $p_c=1$.
In this case, the distribution of the data matrix is 1D, and the bandwidth cost is $O(nk)$ words.

The matrix distributions for Algorithm \ref{alg:2D} are given in Figure \ref{fig:2D-distribution}; we use a 2D distribution of $\AA$ and 1D distributions of $\WW$ and $\HH$.
Recall from Table \ref{tab:notation} that $\M{M}_i$ and $\M{M}^i$  denote row and column blocks of $\M{M}$, respectively.
Thus, the notation $(\WW_i)_j$ denotes the $j$th row block within the $i$th row block of $\WW$.
Lines \ref{line:2DsyrkH}--\ref{line:2DcompW} compute $\WW$ for a fixed $\HH$, and lines \ref{line:2DsyrkW}--\ref{line:2DcompH} compute $\HH$ for a fixed $\WW$; note that the computations and communication patterns for the two alternating iterations are analogous.

In the rest of this section, we derive the per-iteration computation and communication costs, as well as the local memory requirements.
We also argue the communication-optimality of the algorithm in the dense case.
Table \ref{tab:costs} summarizes the results of this section and compares them to \NaiveAlg.

\begin{table*}\begin{center}
\begin{tabular}{|c|c|c|c|c|}
\hline
\textbf{Algorithm} & \textbf{Flops} & \textbf{Words} & \textbf{Messages} & \textbf{Memory} \\ \hline
\NaiveAlg & $O\lt(\frac{mnk}{p}+(m+n)k^2+C_\text{BPP}\lt(\frac{m+n}{p},k\rt)\rt)$ & $O((m+n)k)$ & $O(\log p)$ & $O\lt(\frac{mn}{p}+(m+n)k\rt)$ \\ \hline
\ParNMF\ ($m/p \geq n$) & $O\lt(\frac{mnk}{p}+C_\text{BPP}\lt(\frac{m+n}{p},k\rt)\rt)$ & $O(nk)$ & $O(\log p)$ & $O\lt(\frac{mn}{p}+\frac{mk}{p}+nk\rt)$ \\ \hline
\ParNMF\ ($m/p < n$) & $O\lt(\frac{mnk}{p}+C_\text{BPP}\lt(\frac{m+n}{p},k\rt)\rt)$ & $O\lt( \sqrt{\frac{mnk^2}{p}}\rt)$ & $O(\log p)$ & $O\lt(\frac{mn}{p}+\sqrt{\frac{mnk^2}{p}}\rt)$ \\ \hline
Lower Bound & $-$ & $\Omega(\min\lt\{\sqrt{\frac{mnk^2}{p}},nk\rt\})$ & $\Omega(\log p)$ & $\frac{mn}{p}+\frac{(m+n)k}{p}$ \\ \hline
\end{tabular}
\end{center}
\caption{Algorithmic costs for \NaiveAlg\ and \ParNMF\ assuming data matrix $\AA$ is dense.  Note that the communication costs (words and messages) also apply for sparse $\AA$.}
\label{tab:costs}
\end{table*}
\begin{algorithm}[t!]
\caption{$[\WW,\HH] = \text{\ParNMF}(\AA,k)$}
\label{alg:2D}
\begin{algorithmic}[1]
\Require $\AA$ is an $m\times n$ matrix distributed across a $p_r\times p_c$ grid of processors, $k$ is rank of approximation
\Require Local matrices: $\AA_{ij}$ is $m/p_r\times n/p_c$, $\WW_i$ is $m/p_r\times k$, $(\WW_i)_j$ is $m/p\times k$, $\HH_j$ is $k\times n/p_c$, and $(\HH_j)_i$ is $k\times n/p$
\State $p_{ij}$ initializes $(\HH_j)_i$
\While{not converged}
	\Statex \textbf{/* Compute $\WW$ given $\HH$ */} 
	\State $p_{ij}$ computes $\M{U}_{ij}=(\HH_j)_i{(\HH_j)_i}^T$
		\label{line:2DsyrkH}
	\State compute $\HH\HH^T {=} \sum_{i,j} \M{U}_{ij}$ using all-reduce across all procs
		\label{line:2Dall-reduceH}
		\Statex\Comment{$\HH\HH^T$ is $k\times k$ and symmetric}
	\State $p_{ij}$ collects $\HH_j$ using all-gather across proc columns
		\label{line:2Dall-gatherH}
	\State $p_{ij}$ computes $\M{V}_{ij}=\AA_{ij}\HH_j^T$
		\label{line:2DNEW}
		\Statex\Comment{$\M{V}_{ij}$ is $m/p_r \times k$}
	\State compute $(\AA\HH^T)_i {=} \sum_j \M{V}_{ij}$ using reduce-scatter across proc row to achieve row-wise distribution of $(\AA\HH^T)_i$
		\label{line:2Dreduce-scatterAHT}
		\Statex \Comment{$p_{ij}$ owns $m/p\times k$ submatrix $((\AA\HH^T)_i)_j$}
	\State $p_{ij}$ computes $(\WW_i)_j = \Argmin{\M{\tilde W} \geq 0} \lt\|\M{\tilde W}(\HH\HH^T) - ((\AA\HH^T)_i)_j \rt\|$
		\label{line:2DcompW}
	\Statex \textbf{/* Compute $\HH$ given $\WW$ */}
	\State $p_{ij}$ computes $\M{X}_{ij}={(\WW_i)_j}^T(\WW_i)_j$
		\label{line:2DsyrkW}
	\State compute $\WW^T\WW {=} \sum_{i,j} \M{X}_{ij}$ using all-reduce across all procs
		\label{line:2Dall-reduceW}
		\Statex\Comment{$\WW^T\WW$ is $k\times k$ and symmetric}
	\State $p_{ij}$ collects $\WW_i$ using all-gather across proc rows
		\label{line:2Dall-gatherW}
	\State $p_{ij}$ computes $\M{Y}_{ij}={\WW_i}^T\AA_{ij}$
		\label{line:2DNEH}
		\Statex\Comment{$\M{Y}_{ij}$ is $k\times n/p_c$}
	\State compute $(\WW^T\AA)^j = \sum_i \M{Y}_{ij}$ using reduce-scatter across proc columns to achieve column-wise distribution of $(\WW^T\AA)^j$
		\label{line:2Dreduce-scatterWTA}
		\Statex\Comment{$p_{ij}$ owns $k\times n/p$ submatrix $((\WW^T\AA)^j)^i$}
	\State $p_{ij}$ computes $(\HH^j)^i = \Argmin{\M{\tilde H} \geq 0} \lt\|(\WW^T\WW)\M{\tilde H} - ((\WW^T\AA)^j)^i \rt\|$
		\label{line:2DcompH}
\EndWhile
\Ensure $\displaystyle \WW, \HH \approx \Argmin{\M{\tilde W} \geq 0, \M{\tilde H} \geq 0} \|\AA- \M{\tilde W} \M{\tilde H}\|$
\Ensure $\WW$ is an $m\times k$ matrix distributed row-wise across processors, $\HH$ is a $k\times n$ matrix distributed column-wise across processors
\end{algorithmic}
\end{algorithm}

\begin{figure}
\centering
\begin{tikzpicture}

\draw[xscale=2,yscale=3,ultra thick] (0,0) grid (2,3);
\draw[yscale=3/2] (-2,0) grid (-1,6);
\draw[yscale=3,ultra thick] (-2,0) grid (-1,3);
\draw[xscale=2/3] (0,10) grid (6,11);
\draw[xscale=2,ultra thick] (0,10) grid (2,11);

\node[draw=none] at (2,-0.5) {\LARGE $\AA$};
\node[draw=none] at (1,7.5) {\Large $\AA_{00}$};
\node[draw=none] at (1,4.5) {\Large $\AA_{10}$};
\node[draw=none] at (1,1.5) {\Large $\AA_{20}$};
\node[draw=none] at (3,7.5) {\Large $\AA_{01}$};
\node[draw=none] at (3,4.5) {\Large $\AA_{11}$};
\node[draw=none] at (3,1.5) {\Large $\AA_{21}$};
\node[draw=none] at (-1.5,-0.5) {\LARGE $\WW$};
\node[draw=none] at (-2.5,7.5) {\Large $\WW_0$};
\node[draw=none] at (-2.5,4.5) {\Large $\WW_1$};
\node[draw=none] at (-2.5,1.5) {\Large $\WW_2$};
\node[draw=none] at (-1.5,8.25) {$(\WW_0)_0$};
\node[draw=none] at (-1.5,6.75) {$(\WW_0)_1$};
\node[draw=none] at (-1.5,5.25) {$(\WW_1)_0$};
\node[draw=none] at (-1.5,3.75) {$(\WW_1)_1$};
\node[draw=none] at (-1.5,2.25) {$(\WW_2)_0$};
\node[draw=none] at (-1.5,0.75) {$(\WW_2)_1$};
\node[draw=none] at (-1,10.5) {\LARGE $\HH$};
\node[draw=none] at (1,11.5) {\Large $\HH^0$};
\node[draw=none] at (3,11.5) {\Large $\HH^1$};
\node[draw=none] at (.33,10.5) {\scriptsize $(\HH^0)^0$};
\node[draw=none] at (1,10.5) {\scriptsize $(\HH^0)^1$};
\node[draw=none] at (1.66,10.5) {\scriptsize $(\HH^0)^2$};
\node[draw=none] at (2.33,10.5) {\scriptsize $(\HH^1)^0$};
\node[draw=none] at (3,10.5) {\scriptsize $(\HH^1)^1$};
\node[draw=none] at (3.66,10.5) {\scriptsize $(\HH^1)^2$};

\node[draw=none] at (-0.25,10.5) {$k$};
\node[draw=none] at (-0.25,4.5) {$m$};
\node[draw=none] at (-0.25,6.5) {$\uparrow$};
\node[draw=none] at (-0.25,2.5) {$\downarrow$};
\node[draw=none] at (-0.75,7.5) {$\frac{m}{p_r}$};
\node[draw=none] at (-0.75,8.25) {$\uparrow$};
\node[draw=none] at (-0.75,6.75) {$\downarrow$};
\node[draw=none] at (-0.75,2.25) {$\frac{m}{p}$};
\node[draw=none] at (-1.5,9.25) {$k$};
\node[draw=none] at (2,9.25) {$n$};
\node[draw=none] at (1,9.25) {$\leftarrow$};
\node[draw=none] at (3,9.25) {$\rightarrow$};
\node[draw=none] at (1,9.75) {$\frac{n}{p_c}$};
\node[draw=none] at (0.5,9.75) {$\leftarrow$};
\node[draw=none] at (1.5,9.75) {$\rightarrow$};
\node[draw=none] at (3,9.75) {$\frac{n}{p}$};

\end{tikzpicture} \caption{
Distribution of matrices for \ParNMF\ (Algorithm \ref{alg:2D}), for $p_r=3$ and $p_c=2$.
Note that $\AA_{ij}$ is $m/p_r \times m/p_c$, $\WW_i$ is $m/p_r \times k$, $(\WW_i)_j$ is $m/p\times k$, $\HH_j$ is $k\times n/p_c$, and $(\HH_j)_i$ is $k\times n/p$.
}
\label{fig:2D-distribution}
\end{figure}
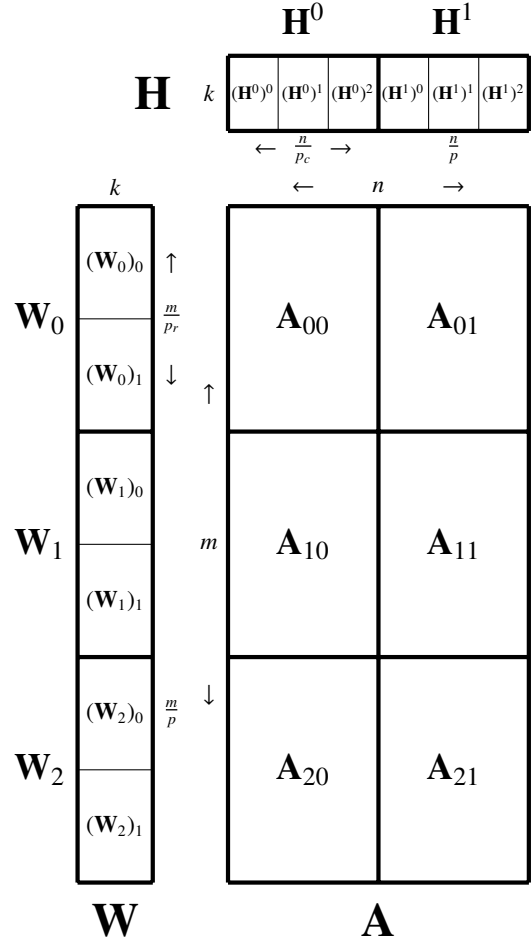

\paragraph{Computation Cost}

Local matrix computations occur at lines \ref{line:2DsyrkH}, \ref{line:2DNEW}, \ref{line:2DsyrkW}, and \ref{line:2DNEH}.
In the case that $\AA$ is dense, each processor performs 
$$\frac np k^2+2\frac{m}{p_r}\frac{n}{p_c}k+\frac mp k^2+2\frac{m}{p_r}\frac{n}{p_c}k=O\lt(\frac{mnk}{p}\rt)$$
flops.
In the case that $\AA$ is sparse, processor $(i,j)$ performs $(m+n)k^2/p$ flops in computing $\M{U}_{ij}$ and $\M{X}_{ij}$, and $4\nnz(\AA_{ij})k$ flops in computing $\M{V}_{ij}$ and $\M{Y}_{ij}$.
Local non-negative least squares problems occur at lines \ref{line:2DcompW} and \ref{line:2DcompH}.
In each case, the symmetric positive semi-definite matrix is $k\times k$ and the number of columns/rows of length $k$ to be computed are $m/p$ and $n/p$, respectively.
These costs together require $C_\text{BPP}(k,(m+n)/p)$ flops.
There are computation costs associated with the all-reduce and reduce-scatter collectives, both those contribute only to lower order terms.

\paragraph{Communication Cost}

Communication occurs during six collective operations (lines \ref{line:2Dall-reduceH}, \ref{line:2Dall-gatherH}, \ref{line:2Dreduce-scatterAHT}, \ref{line:2Dall-reduceW}, \ref{line:2Dall-gatherW}, and \ref{line:2Dreduce-scatterWTA}).
We use the cost expressions presented in Section \ref{sec:collectives} for these collectives.
The communication cost of the all-reduces (lines \ref{line:2Dall-reduceH} and \ref{line:2Dall-reduceW}) is $\alpha \cdot 4\log p+\beta \cdot 2k^2$; 
the cost of the two all-gathers (lines \ref{line:2Dall-gatherH} and \ref{line:2Dall-gatherW}) is $\alpha \cdot \log p + \beta \cdot \lt( (p_r{-}1)nk/p + (p_c{-}1)mk/p\rt)$; and
the cost of the two reduce-scatters (lines \ref{line:2Dreduce-scatterAHT} and  \ref{line:2Dreduce-scatterWTA}) is $\alpha \cdot \log p + \beta \cdot \lt( (p_c{-}1)mk/p + (p_r{-}1)nk/p\rt)$.

In the case that $m/p<n$, we choose $p_r=\sqrt{np/m} >1$ and $p_c=\sqrt{mp/n}>1$, and these communication costs simplify to $\alpha \cdot O(\log p) + \beta \cdot O(mk/p_r+nk/p_c+k^2) = \alpha \cdot O(\log p) + \beta \cdot O(\sqrt{mnk^2/p}+k^2)$.
In the case that $m/p\geq n$, we choose $p_c=1$, and the costs simplify to $\alpha \cdot O(\log p) + \beta \cdot O(nk)$.

\paragraph{Memory Requirements}

The local memory requirement includes storing each processor's part of matrices $\AA$, $\WW$, and $\HH$.
In the case of dense $\AA$, this is $mn/p+(m+n)k/p$ words; in the sparse case, processor $(i,j)$ requires $\nnz(\AA_{ij})$ words for the input matrix and $(m+n)k/p$ words for the output factor matrices.
Local memory is also required for storing temporary matrices $\WW_j$, $\HH_i$, $\M{V}_{ij}$, and $\M{Y}_{ij}$, of size $2mk/p_r+2nk/p_c)$ words.

In the dense case, assuming $k<n/p_c$ and $k<m/p_r$, the local memory requirement is no more than a constant times the size of the original data.
For the optimal choices of $p_r$ and $p_c$, this assumption simplifies to $k<\max\lt\{\sqrt{mn/p},m/p\rt\}$.

We note that if the temporary memory requirements become prohibitive, the computation of $((\AA \HH^T)_i)_j$ and $((\WW^T\AA)_j)_i$ via all-gathers and reduce-scatters can be blocked, decreasing the local memory requirements at the expense of greater latency costs.
While this case is plausible for sparse $\AA$, we did not encounter local memory issues in our experiments.

\paragraph{Communication Optimality}

In the case that $\AA$ is dense, Algorithm \ref{alg:2D} provably minimizes communication costs.
Theorem \ref{thm:LB} establishes the bandwidth cost lower bound for any algorithm that computes $\WW^T\AA$ or $\AA\HH^T$ each iteration.
A latency lower bound of $\Omega(\log p)$ exists in our communication model for any algorithm that aggregates global information \cite{CH+07}.
For NMF, this global aggregation is necessary each iteration to compute residual error in the case that $\AA$ is distributed across all $p$ processors, for example.
Based on the costs derived above, \ParNMF\ is communication optimal under the assumption $k<\sqrt{mn/p}$, matching these lower bounds to within constant factors.

\begin{theorem}[\cite{DE+13}]
\label{thm:LB}
Let $\AA \in \Rn{m \times n}$, $\WW \in \Rn{m \times k}$, and $\HH \in \Rn{k \times n}$ be dense matrices, with $k<n\leq m$.  If $k < \sqrt{mn/p}$, then any distributed-memory parallel algorithm on $p$ processors that load balances the matrix distributions and computes $\WW^T \AA$ and/or $\AA \HH^T$ must communicate at least $\Omega(\min\{\sqrt{mnk^2/p},nk\})$ words along its critical path.
\end{theorem}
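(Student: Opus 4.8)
The plan is to prove the bound for the product $\AA\HH^T$; the argument for $\WW^T\AA$ is identical after transposing, and since the algorithm computes at least one of the two, either case yields the claim. I would begin by modeling the dense multiplication $\AA\HH^T$ combinatorially: its computation consists of the $mnk$ scalar multiplications $A_{ij}H_{lj}$ indexed by triples $(i,j,l)\in[m]\times[n]\times[k]$, where the multiplication $(i,j,l)$ reads the entry $A_{ij}$ of $\AA$, reads the entry $H_{lj}$ of $\HH$, and contributes to the output entry $(\AA\HH^T)_{il}$. For any single processor, let $S$ be the set of triples it computes; its three coordinate-plane projections onto the $(i,j)$, $(l,j)$, and $(i,l)$ planes have sizes $S_A$, $S_H$, $S_C$ equal to the number of distinct entries of $\AA$, of $\HH$, and of the output that the processor respectively touches.

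The engine of the lower bound is the Loomis--Whitney inequality, which gives $|S|\le\sqrt{S_A\,S_H\,S_C}$. Because the algorithm load-balances the computation, some processor performs at least $mnk/p$ of the multiplications, so for that processor $S_A\,S_H\,S_C\ge(mnk/p)^2$. Next I would convert projection sizes into communicated words using a volumetric (Irony--Toledo--Tiskin--style) accounting: since the matrices are load-balanced in storage, each processor initially holds at most a constant times its fair share $mn/p$ of $\AA$, $nk/p$ of $\HH$, and is responsible at the end for at most $mk/p$ of the output. Every distinct $\AA$- or $\HH$-entry a processor touches beyond what it initially holds must be received, and every output entry it contributes to but does not ultimately own must have its partial sum sent away; hence, up to constant factors, the busiest processor communicates at least $W\ge(S_A-mn/p)_+ + (S_H-nk/p)_+ + (S_C-mk/p)_+$ words, subject to the hard caps $S_A\le mn$, $S_H\le nk$, $S_C\le mk$.

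It then remains to minimize this communication over all feasible $(S_A,S_H,S_C)$ satisfying $S_A S_H S_C\ge(mnk/p)^2$. The key observation is that $\AA$ is the large matrix, so its fair share $mn/p$ is generous: a processor may take $S_A=mn/p$ at no communication cost, and the hypothesis $k<\sqrt{mn/p}$ (equivalently $\sqrt{mnk^2/p}<mn/p$) guarantees this is the economical choice rather than the balanced optimum $(mnk/p)^{2/3}$. With $S_A$ pinned at $mn/p$, the constraint reduces to $S_H S_C\ge mnk^2/p$, and minimizing $(S_H-nk/p)_+ + (S_C-mk/p)_+$ via AM--GM drives $S_H\approx S_C\approx\sqrt{mnk^2/p}$, giving $W=\Omega(\sqrt{mnk^2/p})$ whenever the cap $S_H\le nk$ is inactive (i.e.\ $m/p\le n$). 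When $\AA$ is very tall ($m/p>n$) the cap $S_H\le nk$ binds, forcing $S_H=nk$ while $S_C$ and $S_A$ remain at their fair shares, which yields $W=\Omega(nk)$; together these two regimes produce the stated bound $\Omega(\min\{\sqrt{mnk^2/p},nk\})$.

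The step I expect to be the main obstacle is the second one---turning projection sizes into a rigorous lower bound on words communicated. The input side is straightforward, but the output side requires care: partial sums of a single output entry may be accumulated across several processors (exactly as the reduce-scatter in Algorithm \ref{alg:2D} does), so I must argue that a processor contributing to an output entry it does not own necessarily injects a word into the communication volume, and that transient over-storage cannot be exploited to evade the fair-share accounting. Making this precise---so that the inequality for $W$ holds along the critical path for arbitrary schedules---is the crux; once it is established, the concluding constrained optimization is routine, and the hypothesis $k<\sqrt{mn/p}$ is exactly what places the minimizer at the thin-matrix corner rather than at the balanced interior optimum.
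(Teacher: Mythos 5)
Your proposal is correct in substance, but it takes a genuinely different route from the paper: the paper does not prove the bound at all, it simply invokes the memory-independent communication lower bounds for rectangular matrix multiplication from Demmel et al.\ \cite{DE+13} and checks that the hypothesis $k<\sqrt{mn/p}$ places the problem in the ``one or two large dimensions'' regime (with the case split $p \gtrless m/n$ matching your $m/p \lessgtr n$ split and selecting $\Omega(nk)$ versus $\Omega(\sqrt{mnk^2/p})$). What you have written is essentially a reconstruction of the proof \emph{inside} that citation: the Loomis--Whitney bound $|S|\le\sqrt{S_A S_H S_C}$ on the projections of one processor's set of scalar multiplications, the pigeonhole fact that some processor performs $mnk/p$ multiplications, the Irony--Toledo--Tiskin-style accounting that charges any projection in excess of the processor's load-balanced share to words communicated, and the constrained minimization over $(S_A,S_H,S_C)$ whose corner solution is selected precisely by $k<\sqrt{mn/p}$. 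Your approach buys self-containedness and makes visible exactly where each hypothesis is used; the paper's buys brevity at the cost of opacity. Two small points of care: first, you do not need load balancing of the \emph{computation} to get the $mnk/p$ work bound (pigeonhole gives it whenever all $mnk$ multiplications are performed, which is the implicit ``classical algorithm'' assumption); the theorem's actual hypothesis, load balancing of the matrix \emph{distributions}, is what you need to cap the free local data at $O((mn+nk+mk)/p)$ words. Second, the step you flag as the crux --- converting the output projection $S_C$ into sent words when partial sums are accumulated across processors, and ruling out evasion via over-replication --- is real but standard: every operand used is either initially resident or received, and every output contributed to is either finally resident or has a partial sum sent, which yields your inequality for $W$ up to constants; this is exactly the content supplied by the cited prior work, so your argument closes once you import (or reprove) that lemma.
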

\begin{proof}
The proof follows directly from \cite[Section II.B]{DE+13}.
Each matrix multiplication $\WW^T \AA$ and $\AA \HH^T$ has dimensions $k<n\leq m$, so the assumption $k<\sqrt{mn/p}$ ensures that neither multiplication has ``3 large dimensions.''
Thus, the communication lower bound is either $\Omega(\sqrt{mnk^2/p})$ in the case of $p>m/n$ (or ``2 large dimensions''), or $\Omega(nk)$, in the case of $p<m/n$ (or ``1 large dimension'').
If $p<m/n$, then $nk<\sqrt{mnk^2/p}$, so the lower bound can be written as $\Omega(\min\{\sqrt{mnk^2/p},nk\})$.
\end{proof}

We note that the communication costs of Algorithm \ref{alg:2D} are the same for dense and sparse data matrices (the data matrix itself is never communicated).
In the case that $\AA$ is sparse, this communication lower bound does not necessarily apply, as the required data movement depends on the sparsity pattern of $\AA$.
Thus, we cannot make claims of optimality in the sparse case (for general $\AA$).
The communication lower bounds for $\WW^T \AA$ and/or $\AA \HH^T$ (where $\AA$ is sparse) can be expressed in terms of hypergraphs that encode the sparsity structure of $\AA$ \cite{BDKS15}.
Indeed, hypergraph partitioners have been used to reduce communication and achieve load balance for a similar problem: computing a low-rank representation of a sparse tensor (without non-negativity constraints on the factors) \cite{KU15}.

 \section{Experiments}\label{sec:experiment}
In the data mining and machine learning community, there had been a large interest in using Hadoop for large scale implementation. Hadoop does lots of disk I/O and was designed for processing gigantic text files. Many of the real world data sets that is available for research are large scale sparse internet text data such as bag of words, recommender systems, social networks etc. Towards this end, there had been interest towards Hadoop implementation of matrix factorization algorithm \cite{gemulla2011large,liu2010distributed,liao2014cloudnmf}.  However, the use of NMF is beyond the sparse internet data and also applicable for dense real world data such as video, image etc.  Hence in order to keep our implementation applicable to wider audience, we chose MPI for distributed implementation.  Apart from the application point of view, we decided MPI C++ implementation for other technical advantages that is necessary for scientific application such as  (1) it can leverage the recent hardware improvements (2) effective communication between nodes (3) availability of numerically stable BLAS and LAPACK routines etc. We identified a few synthetic and real world datasets to experiment with our MPI implementation and a few baselines to compare our performance.  

\subsection{Experimental Setup}

\subsubsection{Datasets}

We used sparse and dense matrices that are synthetically generated and from real world. We will explain the datasets in this section.

\begin{itemize}
\item Dense Synthetic Matrix ({\em DSYN}): We generate a uniform random matrix of size 172,800 $\times$ 115,200 and add random Gaussian noise. The dimensions of this matrix is chosen such that it is uniformly distributable across processes. Every process will have its own prime seed that is different from other processes to generate the input random matrix $\AA$. 
\item Sparse Synthetic Matrix ({\em SSYN}): We generate a random sparse Erd\H{o}s-R\'{e}nyi matrix of the same dimensions 172,800 $\times$ 115,200 as the dense matrix, with density of 0.001.  That is, every entry is nonzero with probability 0.001.
\item Dense Real World Matrix ({\em Video}): Generally, NMF is performed in the video data for back ground subtraction to detect the moving objects. The low rank matrix $\hat{\AA} \approx \WW \HH^T$ represents background and the error matrix $\AA - \hat{\AA}$ has the moving objects.  Detecting moving objects has many real-world applications such as traffic estimation, security monitoring, etc. In the case of detecting moving objects, only the last minute or two of video is taken from the live video camera. The algorithm to incrementally adjust the NMF based on the new streaming video is presented in \cite{kim2013nonnegative}. To simulate this scenario, we collected a video in a busy intersection of the Georgia Tech campus at 20 frames per second for two minutes. We then reshaped the matrix such that every RGB frame is a column of our matrix, so that the matrix is dense with dimensions 1,013,400 $\times$ 2400.  \item Sparse Real World Matrix {\em Webbase} : We identified this dataset of a very large directed sparse graph with nearly 1 million nodes (1,000,005) and 3.1 million edges (3,105,536). The dataset was first reported by Williams et al. \cite{Williams2009}.  The NMF output of this directed graph will help us understand clusters in graphs. The size of both the real world datasets were adjusted to the nearest dimension for uniformly distributing the matrix.
\end{itemize}

\subsubsection{Machine}

We conducted our experiments on ``Edison'' at the National Energy Research Scientific Computing Center.
Edison is a Cray XC30 supercomputer with a total of 5,576 compute nodes, where each node has dual-socket 12-core Intel Ivy Bridge processors.
Each of the 24 cores has a clock rate of 2.4 GHz (translating to a peak floating point rate of 460.8 Gflops/node) and private 64KB L1 and 256KB L2 caches; each of the two sockets has a (shared) 30MB L3 cache; each node has 64 GB of memory.
Edison uses a Cray ``Aries'' interconnect that has a dragonfly topology.
Because our experiments use a relatively small number of nodes, we consider the local connectivity: a ``blade'' comprises 4 nodes and a router, and sets of 16 blades' routers are fully connected via a circuit board backplane (within a ``chassis'').
Our experiments do not exceed 64 nodes, so we can assume a very efficient, fully connected network.

\subsubsection{Initialization}

To ensure fair comparison among algorithms, the same random seed was used across different methods appropriately. 
That is, the initial random matrix $\HH$ was generated with the same random seed when testing with different algorithms (note that $\WW$ need not be initialized). 
This ensures that all the algorithms perform the same computations (up to roundoff errors), though the only computation with a running time that is sensitive to matrix values is the local NNLS solve via BPP.

\subsection{Algorithms}

For each of our data sets, we benchmark and compare three algorithms: (1) Algorithm \ref{alg:naive}, (2) Algorithm \ref{alg:2D} with $p_r=p$ and $p_c=1$ (1D processor grid), and (3) Algorithm \ref{alg:2D} with $p_r\approx p_c \approx \sqrt p$ (2D processor grid).
We choose these three algorithms to confirm the following conclusions from the analysis of Section \ref{sec:parNMF}: the performance of a naive parallelization of \NaiveAlg\ (Algorithm \ref{alg:naive}) will be severely limited by communication overheads, and the correct choice of processor grid within Algorithm \ref{alg:2D} is necessary to optimize performance.
To demonstrate the latter conclusion, we choose the two extreme choices of processor grids and test some data sets where a 1D processor grid is optimal (e.g., the Video matrix) and some where a squarish 2D grid is optimal (e.g., the Webbase matrix).

While we would like to compare against other high-performance NMF algorithms in the literature, the only other distributed-memory implementations of which we're aware are implemented using Hadoop and are designed only for sparse matrices \cite{liao2014cloudnmf},
\cite{liu2010distributed}, \cite{gemulla2011large}, \cite{Yin2014} and \cite{Faloutsos2014}.
We stress that Hadoop is not designed for high performance, requiring disk I/O between steps, so a run time comparison between a Hadoop implementation and a C++/MPI implementation is not a fair comparison of parallel algorithms.
To give a qualitative example of differences in run time, the running time of a Hadoop implementation of the MU algorithm on a large sparse matrix of dimension $2^{17} \times 2^{16}$ with $2 \times {10^8}$ nonzeros (with k=8) takes on the order of 50 minutes per iteration \cite{liu2010distributed}; our implementation takes a second per iteration for the synthetic data set (which is an order of magnitude larger in terms of rows, columns, and nonzeros) running on only 24 nodes. 

\subsection{Time Breakdown}

To differentiate the computation and communication costs among the algorithms, we present the time breakdown among the various tasks within the algorithms for both performance experiments.
For Algorithm \ref{alg:2D}, there are three local computation tasks and three communication tasks to compute each of the factor matrices:
\begin{itemize}
	\item \textbf{MM}, computing a matrix multiplication with the local data matrix and one of the factor matrices;
	\item \textbf{NLS}, solving the set of NLS problems using BPP;
	\item \textbf{Gram}, computing the local contribution to the Gram matrix;
	\item \textbf{All-Gather}, to compute the global matrix multiplication;
	\item \textbf{Reduce-Scatter}, to compute the global matrix multiplication;
	\item \textbf{All-Reduce}, to compute the global Gram matrix.
\end{itemize}
In our results, we do not distinguish the costs of these tasks for $\WW$ and $\HH$ separately; we report their sum, though we note that we do not always expect balance between the two contributions for each task.
Algorithm \ref{alg:naive} performs all of these tasks except the Reduce-Scatter and the All-Reduce; all of its communication is in the All-Gathers.

\begin{figure*}
     \centering
     	\subfloat[][Sparse Synthetic (SSYN) Comparison]{\includegraphics[width=3.5in,height=2in]{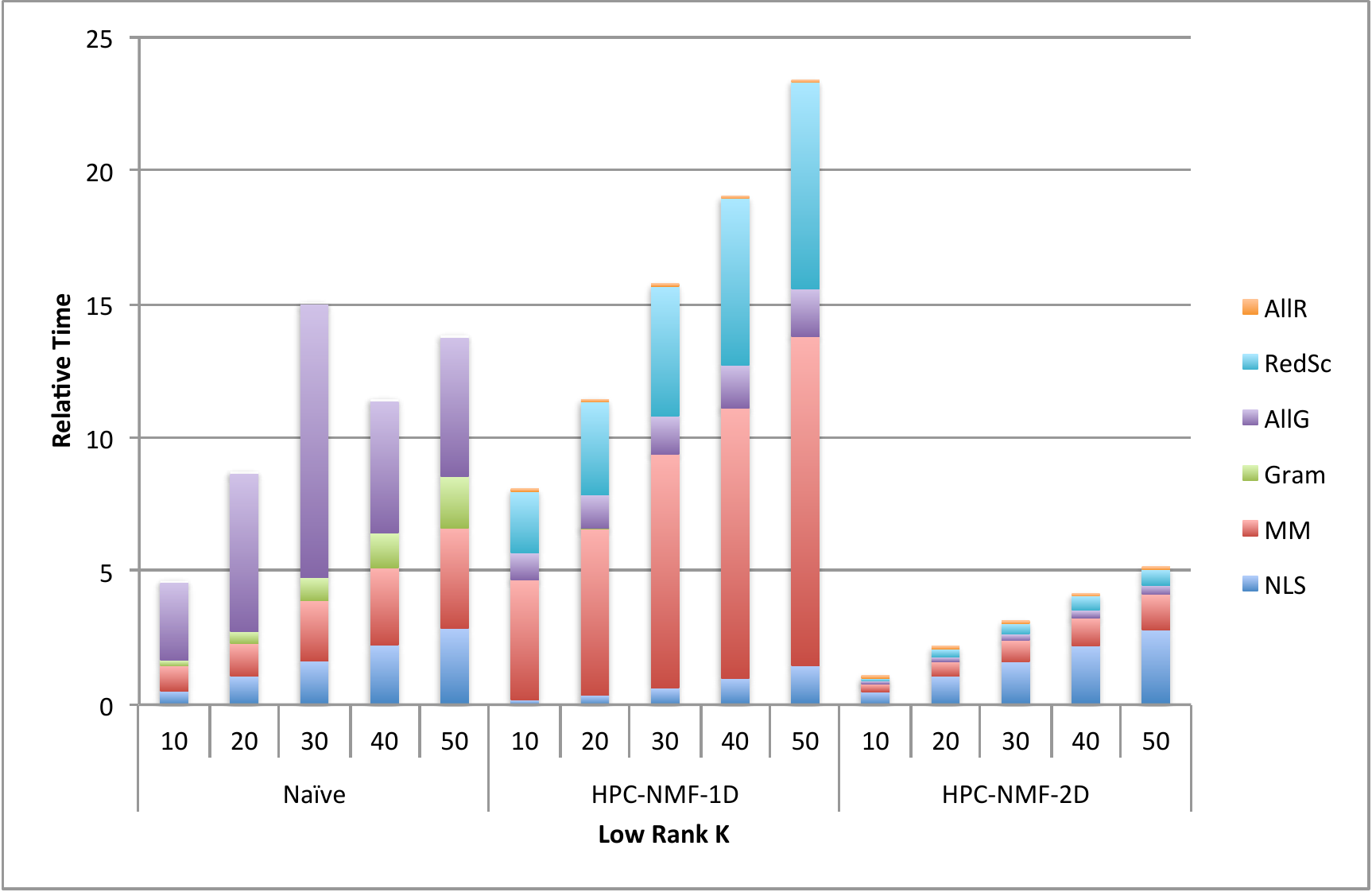}\label{fig:plots:ssyn-speedup}}
         \subfloat[][Sparse Synthetic (SSYN) Scaling]{\includegraphics[width=3.5in,height=2in]{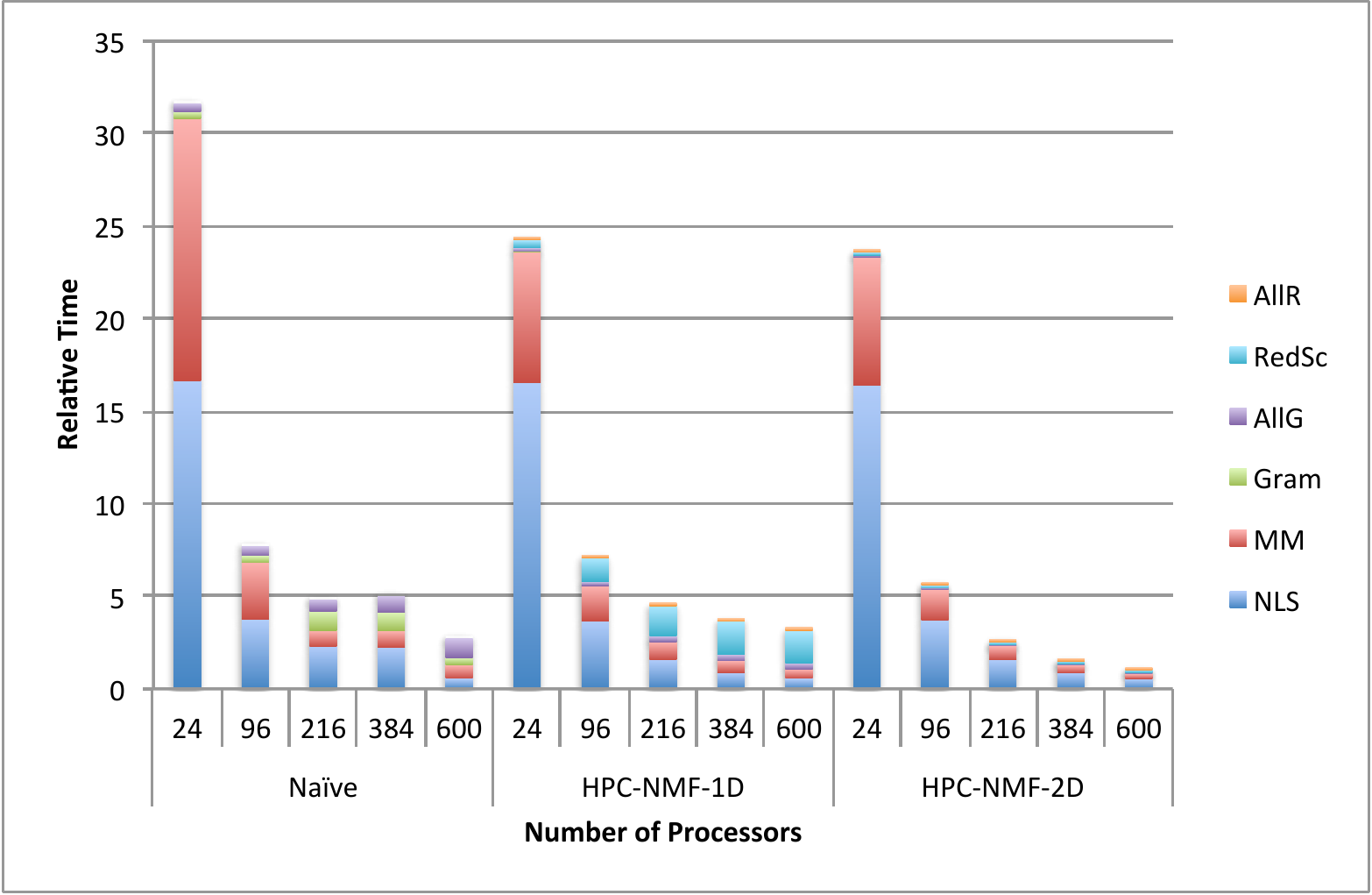}\label{fig:plots:ssyn-scaling}} \\
     \subfloat[][Dense Synthetic (DSYN) Comparison]{\includegraphics[width=3.5in,height=2in]{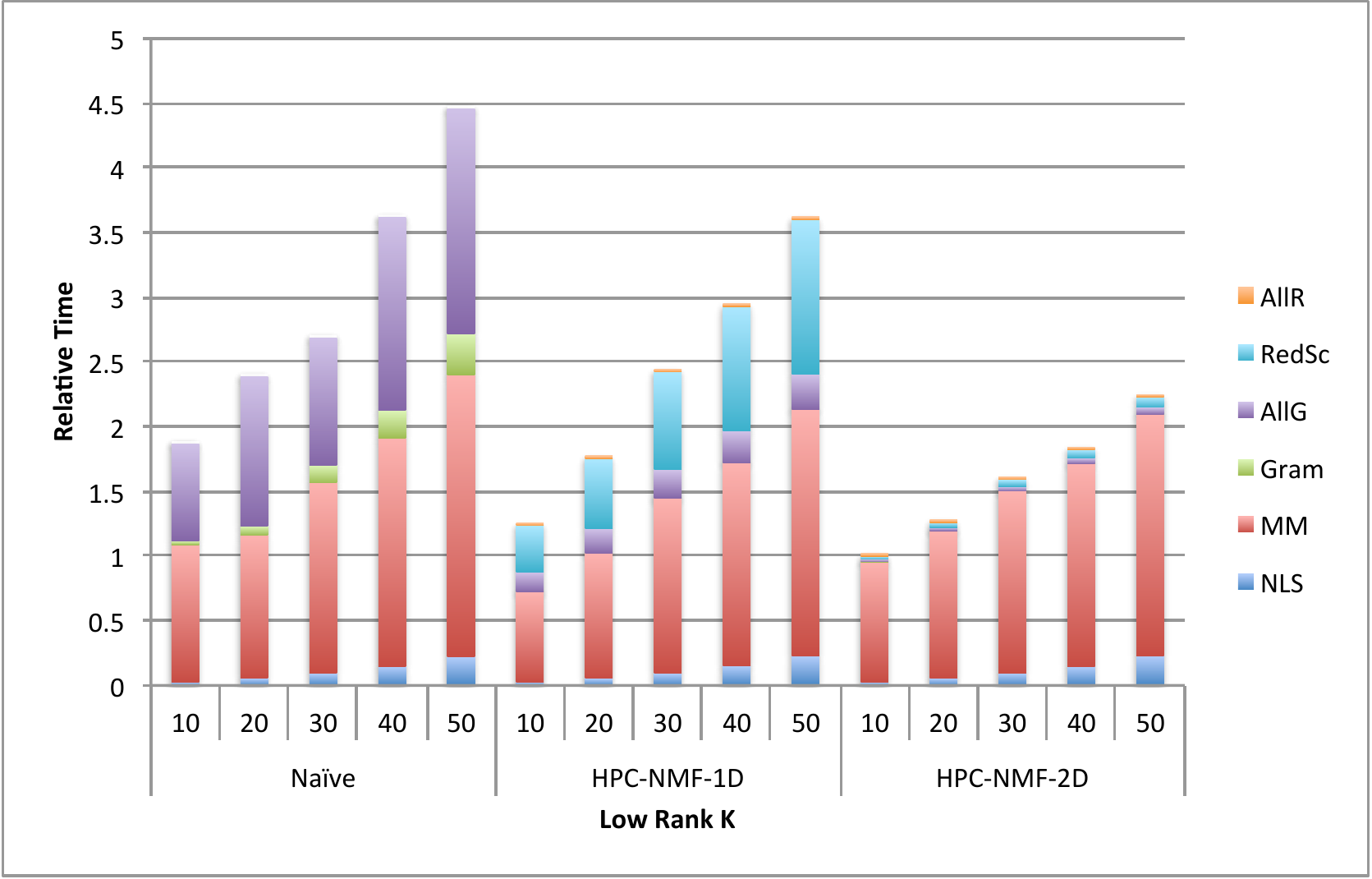}\label{fig:plots:dsyn-speedup}}
          \subfloat[][Dense Synthetic (DSYN) Scaling]{\includegraphics[width=3.5in,height=2in]{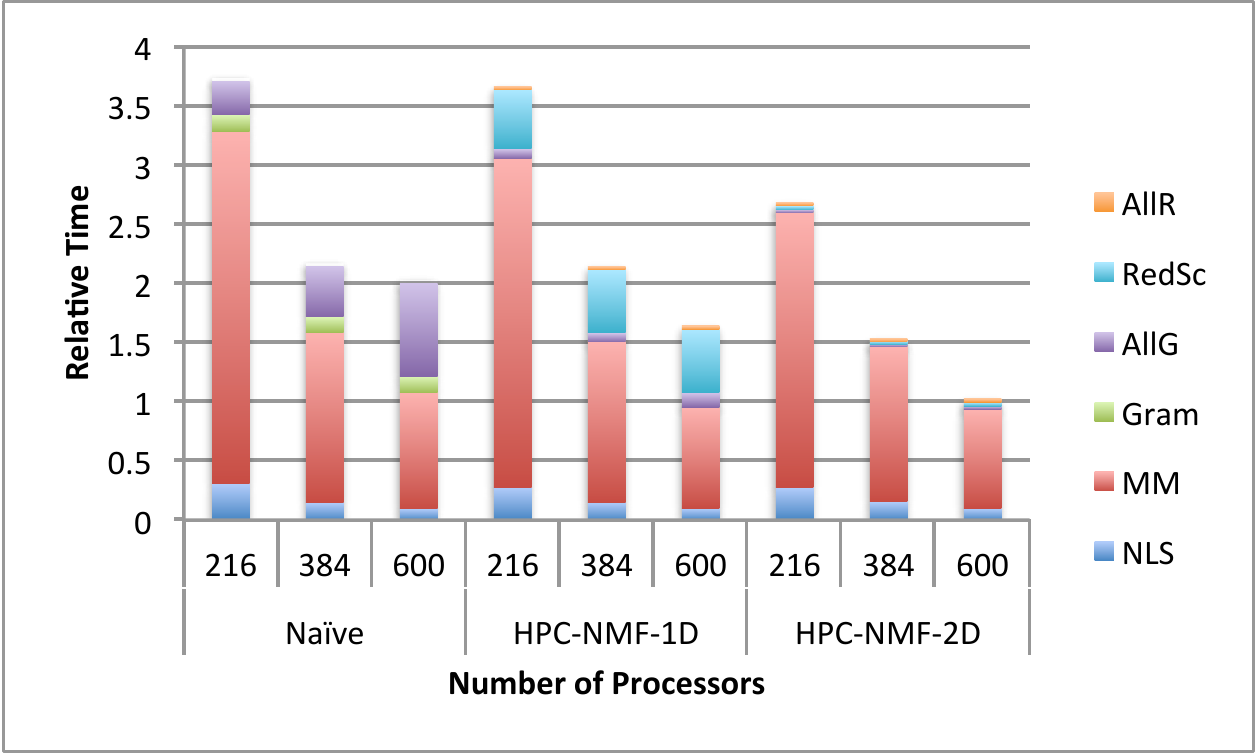}\label{fig:plots:dsyn-scaling}} \\
     \subfloat[][Webbase Comparison]{\includegraphics[width=3.5in,height=2in]{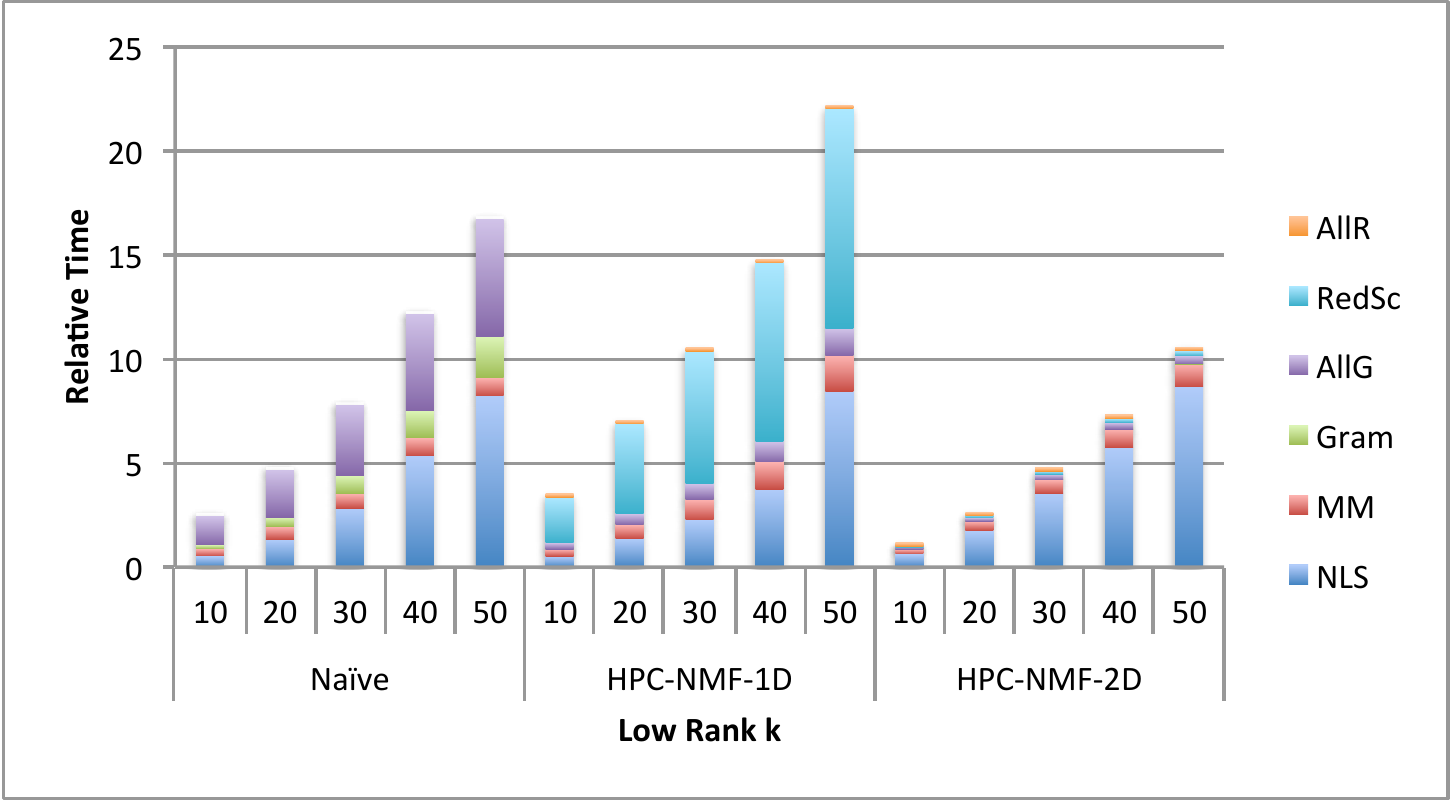}\label{fig:plots:webbase-speedup}}
          \subfloat[][Webbase Scaling]{\includegraphics[width=3.5in,height=2in]{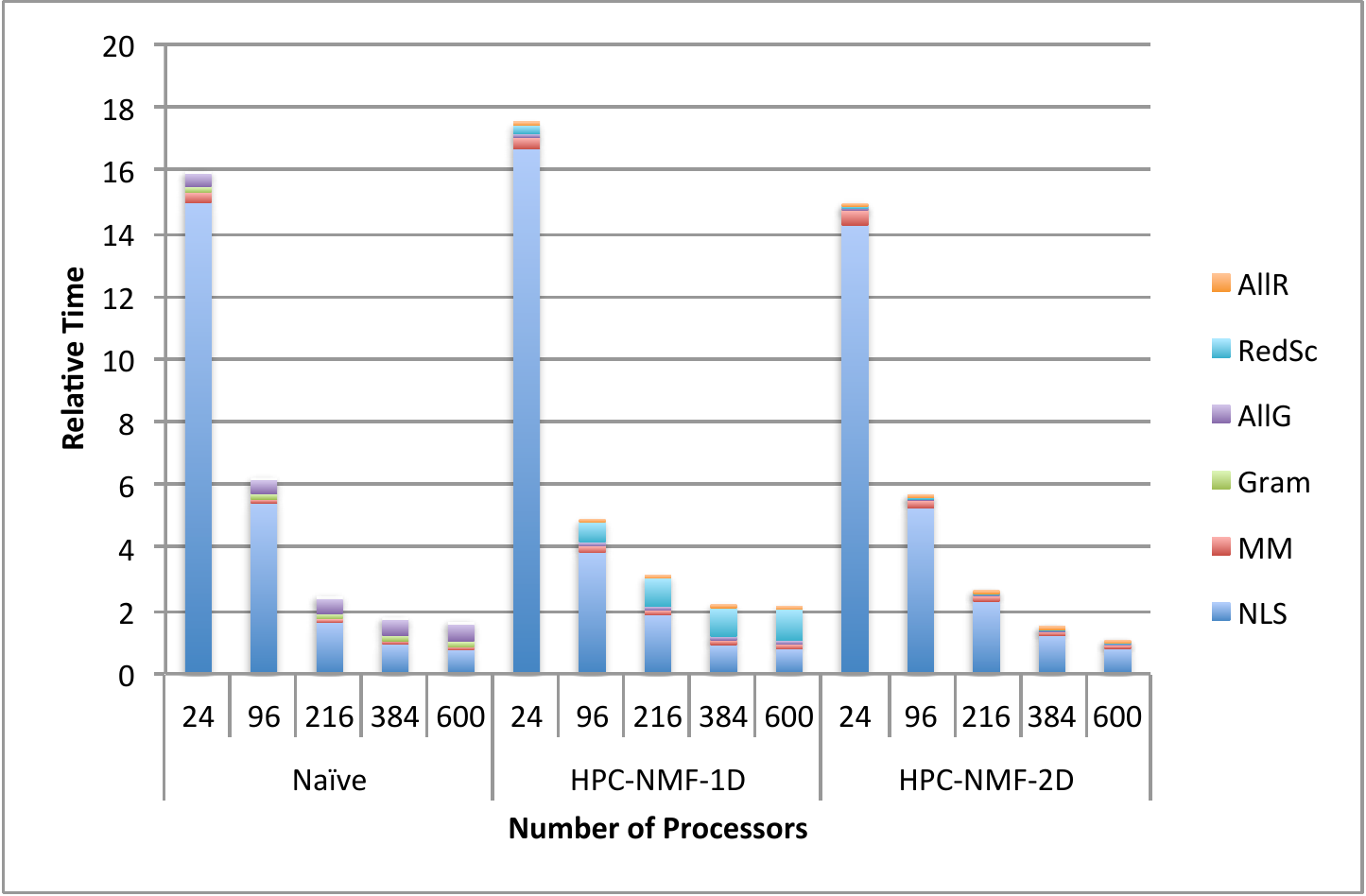}\label{fig:plots:webbase-scaling}} \\ 
     \subfloat[][Video Comparison]{\includegraphics[width=3.5in,height=2in]{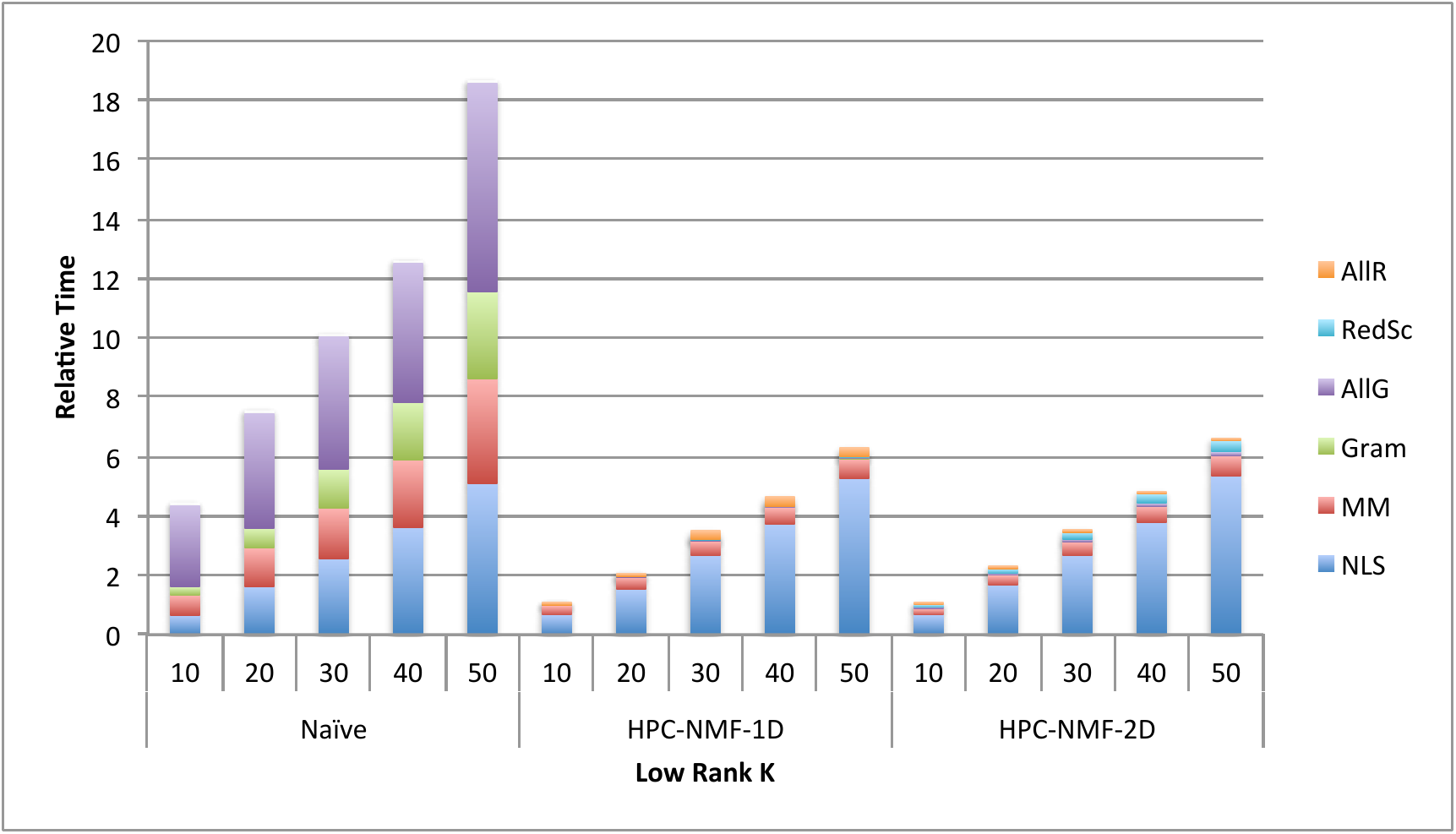}\label{fig:plots:video-speedup}} 
     \subfloat[][Video Scaling]{\includegraphics[width=3.5in,height=2in]{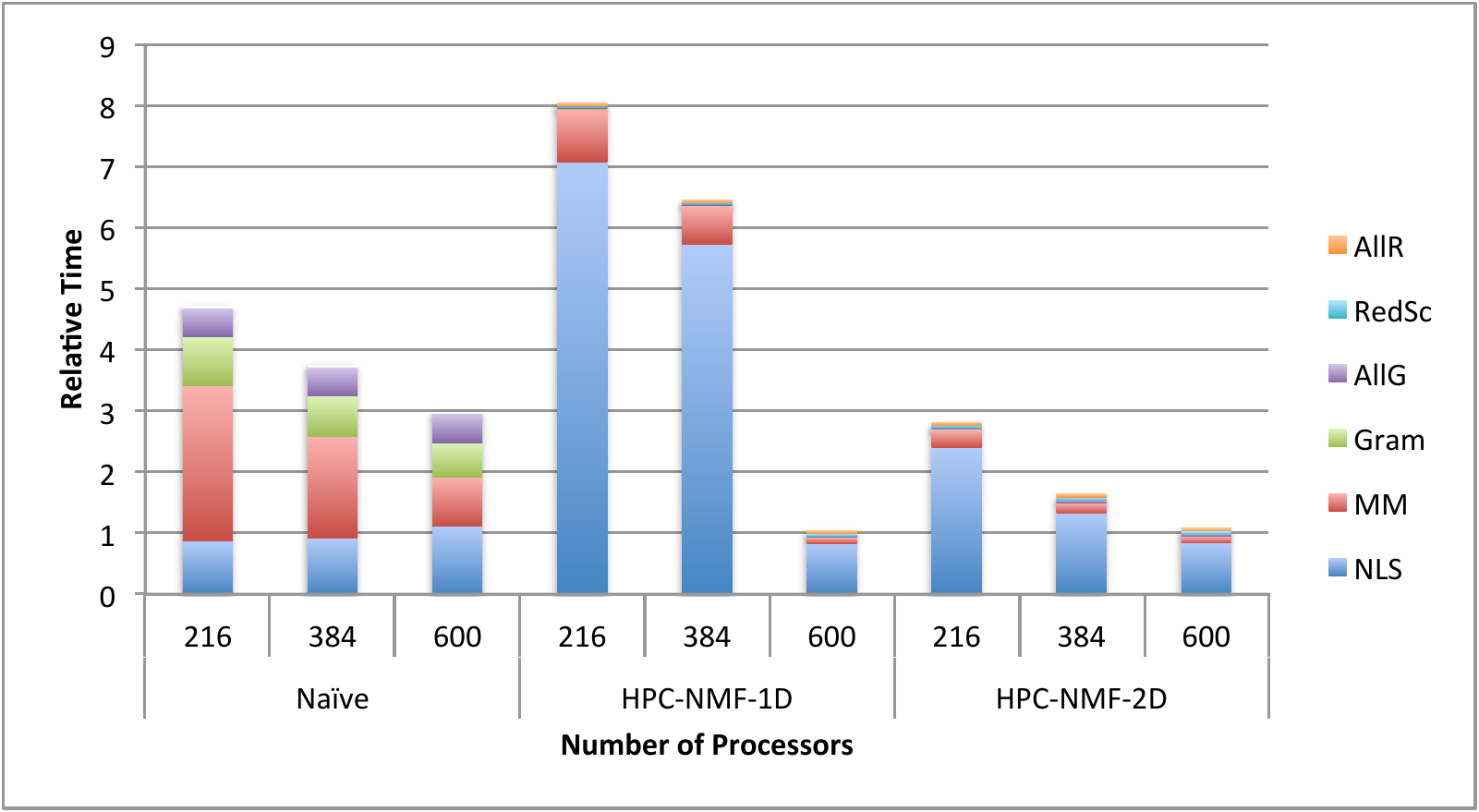}\label{fig:plots:video-scaling}} \\ 
     \caption{Experiments on Sparse and Dense Datasets}
     \label{fig:plots}
\end{figure*}

\subsection{Algorithmic Comparison}

Our first set of experiments is designed primarily to compare the three parallel implementations.
For each data set, we fix the number of processors to be 600 and vary the rank $k$ of the desired factorization.
Because most of the computation (except for NLS) and bandwidth costs are linear in $k$ (except for the All-Reduce), we expect linear performance curves for each algorithm individually.

The left side of Figure \ref{fig:plots} shows the results of this experiment for all four data sets.
The first conclusion we draw is that \ParNMF\ with a 2D processor grid performs significantly better than the \NaiveAlg; the largest speedup is $4.4\times$, for the sparse synthetic data and $k=10$ (a particularly communication bound problem).
Also, as predicted, the 2D processor grid outperforms the 1D processor grid on the squarish matrices.
While we expect the 1D processor grid to outperform the 2D grid for the tall-and-skinny Video matrix, their performance is comparable; this is because both algorithms are computation bound, as we see from Figure \ref{fig:plots:video-speedup}, so the difference in communication is negligible.

The second conclusion we can draw is that the scaling with $k$ tends to be close to linear, with an exception in the case of the Webbase matrix.
We see from Figure \ref{fig:plots:webbase-speedup} that this problem spends much of its time in NLS, which does not scale linearly with $k$.

We can also compare \ParNMF\ with a 1D processor grid with \NaiveAlg\ for squarish matrices (SSYN, DSYN, and Webbase).
Our analysis does not predict a significant difference in communication costs of these two approaches (when $m\approx n$), and we see from the data that \NaiveAlg\ outperforms \ParNMF\ for two of the three matrices (but the opposite is true for DSYN).
The main differences appear in the All-Gather versus Reduce-Scatter communication costs and the local MM (all of which are involved in the $\WW^T\AA$ computation).
In all three cases, our proposed 2D processor grid (with optimal choice of $m/p_r \approx n/p_c$) performs better than both alternatives.

\begin{table*}
\begin{tabular}{|c|cccc|cccc|cccc|}
\hline
&\multicolumn{4}{|c}{\NaiveAlg} & \multicolumn{4}{|c|}{\ParNMF-1D} & \multicolumn{4}{c|}{\ParNMF-2D} \\
\textbf{Cores} &\textbf{DSYN} & \textbf{SSYN} & \textbf{Video} & \textbf{Webbase} & \textbf{DSYN} & \textbf{SSYN} & \textbf{Video} & \textbf{Webbase} & \textbf{DSYN} & \textbf{SSYN} & \textbf{Video} & \textbf{Webbase} \\ \hline
24 &  &6.5632 & & 48.0256& &5.0821 & &52.8549 & &4.8427 & & 84.6286\\
96 &  &1.5929 & & 18.5507& & 1.4836& & 14.5873& & 1.1147& & 16.6966\\
216 &  2.1819 &0.6027 & 2.7899 & 7.1274 & 2.1548 & 0.9488 & 4.7928 & 9.2730 & 1.5283 & 0.4816 & 1.6106 & 7.4799 \\
384 &  1.2594 & 0.6466 & 2.2106 & 5.1431 & 1.2559 & 0.7695 & 3.8295 & 6.4740 & 0.8620 & 0.2661 & 0.8963 & 4.0630\\
600 &  1.1745 & 0.5592 & 1.7583 & 4.6825 & 0.9685 & 0.6666	& 0.5994 & 6.2751 & 0.5519 & 0.1683 & 0.5699 & 2.7376\\
\hline
\end{tabular}
\caption{Per-iteration running times of parallel NMF algorithms for $k=50$.}
\label{tab:exp}
\end{table*}

\subsection{Strong Scalability}

The goal of our second set of experiments is to demonstrate the (strong) scalability of each of the algorithms.
For each data set, we fix the rank $k$ to be 50 and vary the number of processors (this is a strong-scaling experiment because the size of the data set is fixed). 
We run our experiments on $\{24,96,216,384,600\}$ processors/cores, which translates to $\{1,4,9,16,25\}$ nodes.
The dense matrices are too large for 1 or 4 nodes, so we benchmark only on $\{216,384,600\}$ cores in those cases.

The right side of Figure \ref{fig:plots} shows the scaling results for all four data sets, and Table \ref{tab:exp} gives the overall per-iteration time for each algorithm, number of processors, and data set.
We first consider the \ParNMF\ algorithm with a 2D processor grid: comparing the performance results on 25 nodes (300 cores) to the 1 node (24 cores), we see nearly perfect parallel speedups.
The parallel speedups are $23\times$ for SSYN and $28\times$ for the Webbase matrix.
We believe the superlinear speedup of the Webbase matrix is a result of the running time being dominated by NLS; with more processors, the local NLS problem is smaller and more likely to fit in smaller levels of cache, providing better performance.
For the dense matrices, the speedup of \ParNMF\ on 25 nodes over 9 nodes is $2.7\times$ for DSYN and $2.8\times$ for Video, which are also nearly linear.

In the case of the \NaiveAlg\ algorithm, we do see parallel speedups, but they are not linear.
For the sparse data, we see parallel speedups of $10\times$ and $11\times$ with a $25\times$ increase in number of processors.
For the dense data, we observe speedups of $1.6\times$ and $1.8\times$ with a $2.8\times$ increase in the number of processors.
The main reason for not achieving perfect scaling is the unnecessary communication overheads.
 \section{Conclusion}\label{sec:conclusion}

In this paper, we propose a high-performance distributed-memory parallel algorithm that computes an NMF factorization by iteratively solving alternating non-negative least squares (NLS) subproblems.  
We show that by carefully designing a parallel algorithm, we can avoid communication overheads and scale well to modest numbers of cores.

For the datasets on which we experimented, we showed that an efficient implementation of a naive parallel algorithm spends much of its time in interprocessor communication.
In the case of \ParNMF, the problems remain computation bound on up to 600 processors, typically spending most of the time in local NLS solves.

We focus in this work on BPP, which is more expensive per-iteration than alternative methods like MU and HALS, because it has been shown to reduce overall running time in the sequential case by requiring fewer iterations \cite{kim2011fast}.
Because most of the time per iteration of \ParNMF\ is spent on local NLS, we believe further empirical exploration is necessary to confirm the advantages of BPP in the parallel case.
We note that if we use MU or HALS for local NLS, the relative cost of interprocessor communication will grow, making the communication efficiency of our algorithm more important.

In future work, we would like to extend this algorithm to dense and sparse tensors, computing the CANDECOMP/PARAFAC decomposition in parallel with non-negativity constraints on the factor matrices.
We would also like to explore more intelligent distributions of sparse matrices: while our 2D distribution is based on evenly dividing rows and columns, it does not necessarily load balance the nonzeros of the matrix, which can lead to load imbalance in MM.
We are interested in using graph and hypergraph partitioning techniques to load balance the memory and computation while at the same time reducing communication costs as much as possible.
Finally, we have not yet reached the limits of the scalability of \ParNMF; we would like to expand our benchmarks to larger numbers of nodes on the same size datasets to study performance behavior when communication costs completely dominate the running time.
 
\acks
This research was supported in part by an appointment to the Sandia National Laboratories Truman Fellowship in National Security Science and Engineering, sponsored by Sandia Corporation (a wholly owned subsidiary of Lockheed Martin Corporation) as Operator of Sandia National Laboratories under its U.S. Department of Energy Contract No. DE-AC04-94AL85000.

This research used resources of the National Energy Research Scientific Computing Center, a DOE Office of Science User Facility supported by the Office of Science of the U.S. Department of Energy under Contract No. DE-AC02-05CH11231.

\end{document}